\newcommand*\ve{\pmb e}
\newtheorem{theorem}{Theorem}
\definecolor{Simon}{rgb}{0, 0, 1}
\definecolor{CGreen}{rgb}{0, 1, 0}
\definecolor{SimonNew}{rgb}{1, 0, 0}
\definecolor{GC}{rgb}{1, 0, 1}
\journal{Computer Physics Communications}
\begin{document}

\begin{frontmatter}



\title{A \texttt{SPIRED} code for the reconstruction of spin distribution}


\author[inst1]{Simon Buchwald}
\author[inst2]{Gabriele Ciaramella}
\author[inst3]{Julien Salomon}
\author[inst4]{Dominique Sugny}

\affiliation[inst1]{organization={Department of Mathematics, Universität Konstanz},
            addressline={Universitätsstr. 10}, 
            city={Konstanz},
            postcode={78464}, 
            country={Germany}}

\affiliation[inst2]{organization={MOX, Dipartimento di Matematica, Politecnico di Milano},
            addressline={Piazza Leonardo da Vinci 32}, 
            city={Milano},
            postcode={20133}, 
            country={Italy}}

\affiliation[inst3]{organization={INRIA Paris, ANGE team},
            addressline={2 rue Simone Iff}, 
            city={Paris},
            postcode={75589}, 
            country={France}}

\affiliation[inst4]{organization={Laboratoire Interdisciplinaire Carnot de Bourgogne (ICB), UMR 6303 CNRS–Université de Bourgogne},
            addressline={9 Av. A. Savary, B.P. 47 870}, 
            city={Dijon Cedex},
            postcode={F-21078}, 
            country={France}}

\begin{abstract}
In Nuclear Magnetic Resonance (NMR), it is of crucial importance to have an accurate knowledge of the sample probability distribution corresponding to inhomogeneities of the magnetic fields.
An accurate identification of the sample distribution requires a set of experimental data that is sufficiently rich to extract all fundamental information. 
These data depend strongly on the control fields (and their number) used experimentally.
In this work, we present and analyze a greedy reconstruction algorithm, and provide the corresponding \texttt{SPIRED} code, for the computation of a set of control functions allowing the generation of data that are appropriate for the accurate reconstruction of a sample distribution. In particular, the focus is on NMR and the Bloch system with inhomogeneities in the magnetic fields in all spatial directions. Numerical examples illustrate this general study.
\end{abstract}

%

\begin{keyword}
Quantum control \sep Greedy reconstruction algorithm \sep spin distribution \sep Nuclear Magnetic Resonance
\end{keyword}

\end{frontmatter}


\section{Introduction}
\label{sec:intro}

Quantum Control (QC) is nowadays a well-recognized area of research~\cite{alessandrobook,PRXQuantumsugny,altafini2012,dong2010,BCSbook} with many applications ranging from magnetic resonance~\cite{levitt2013spin,glaser_training_2015,lapert_exploring_2012} and atomic and molecular physics~\cite{brif:2010,RMP:rotation,BEC2021,lapertferrini2012} to quantum technologies~\cite{glaser_training_2015,QT,kochroadmap}. Its goal is generally to design external control fields to perform quantum operations on the studied system. A severe limitation of QC comes from measurement processes which are much more difficult to account for than their classical counterpart. This explains that a majority of QC protocols are performed in an open-loop framework without any feedback from the experiment when applying the control. A good agreement between theory and experiment is achieved if all the parameters of the model system are perfectly known within a given range of precision. The values of such parameters can be estimated experimentally but can also be actively found by using specifically adapted controls. To this aim, different approaches using quantum features have been developed recently with success~\cite{Helstrom_review_1969,Vittorio_review_2004,Degen_review_2017}. Among others, we can mention inversion techniques~\cite{madaysalomon}, selective controls~\cite{Conolly_optimal_1986,Zhang_minimum_2015,Ansel_2021,Van_Damme_time_optimal_2018}, the maximization of quantum Fischer information~\cite{liu2017,yuan2017,Lin_optimal_2021,Liu_optimal_QM_review_2022,Lin_Application_2022} and the fingerprinting approach~\cite{ma2013,ansel2017}. Such methods allow one to estimate the value of the Hamiltonian parameter as well as its variation range. However, this latter is not the only interesting quantity and the probability distribution is also a key feature of the experimental sample. When controlling an ensemble of quantum systems, this distribution can be interpreted as the number of individual systems having a given value of the parameter. The distribution can have a simple form such as a Gaussian or a Lorentzian one. In this case, the identification is quite straightforward and can be done using standard techniques. However, the identification is much more difficult when the distribution has a complex structure with, e.g., several peaks.

In a previous work~\cite{spinpaper}, we introduced a Greedy Reconstruction Algorithm (GRA) to identify in a systematic way the probability distribution of one given Hamiltonian parameter. 
This was based on the framework presented in~\cite{BCS2021,madaysalomon}.
In particular, we focused on an ensemble of spin 1/2 particles in Nuclear Magnetic Resonance (NMR) subjected to an inhomogeneous radio-frequency magnetic field~\cite{kobzar:2008,levitt2013spin,lapertprl,lapert_exploring_2012,khanejaspin,bonnard},
where the algorithm was successfully applied to identify the distribution of the scaling factor corresponding to the sample inhomogeneity. 
Notice that a convergence analysis was only briefly sketched in~\cite{spinpaper}, without rigorous proof.
The goal of the present paper is to extend the work~\cite{spinpaper} from different points of view. First, we extend the GRA for the reconstruction of joint distributions of two distinct inhomogeneous Hamiltonian parameters. Second, we provide full MATLAB codes implementing our GRA and its optimized version (called OGRA) to find spin distribution. Such codes can be directly used to solve the problems presented in~\cite{spinpaper} and those investigated in this study. Third, we take also the opportunity of this paper to prove theoretical results covering also the ones only stated in~\cite{spinpaper}. 
As a result, this paper not only considers a more general problem than the one presented in~\cite{spinpaper}, but also provides a full MATLAB code and detailed and rigorous convergence analysis.

The paper is organized as follows. The identification problem of the spin distribution in NMR is presented in Sec.~\ref{sec:distribution}. The different variants of the greedy reconstruction algorithm are described in Sec.~\ref{sec:greedy}. Section~\ref{sec:code} is dedicated to the description of the structure of the code \texttt{SPIRED} and its use. A convergence analysis of the algorithm is provided in Sec.~\ref{sec:analysis}. Numerical results are presented in Sec.~\ref{sec:numerics}.  Conclusion and prospective views are drawn in Sec.~\ref{sec:conclusion}. Additional results are presented in~\ref{sec:appendixA}.




\section{Identification of spin distribution}\label{sec:distribution}

The framework of our \texttt{SPIRED} code is illustrated in a standard control problem in NMR, i.e. a spin ensemble subjected to inhomogeneous radio-frequency magnetic fields~\cite{kobzar:2008,skinner:2005,lapert_exploring_2012}. In a given rotating frame, each isochromat is characterized by a Bloch vector $\textbf{M}=[M_x,M_y,M_z]^\top$, evolving in time according to the equations
$$
\begin{cases}
\dot{M}_x=-\omega M_y+(1+\alpha)\omega_yM_z, \\
\dot{M}_y=\omega M_x-(1+\alpha)\omega_xM_z, \\
\dot{M}_z=(1+\alpha)\omega_xM_y-(1+\alpha)\omega_yM_x.
\end{cases}
$$
Notice that the components of $M$ satisfy $M_x^2+M_y^2+M_z^2=M_0^2$, with $M_0$ the equilibrium magnetization. Here, $\omega_x$ and $\omega_y$ are time-dependent controls corresponding to the components of the magnetic field along the $x$- and $y$- directions.
The parameters $\omega$ and $\alpha$ correspond to offset and control field inhomogeneities, respectively~\cite{levitt2013spin}. 
In standard experiments, we have $\frac{\omega}{2 \pi} \in [-20,20]$ Hz and $\alpha \in [-0.2,0.2]$.
For the purpose of this paper, we assume that the probability densities of $\omega$ and $\alpha$ are unknown.
The controls $\frac{\omega_x}{2\pi}$ and $\frac{\omega_y}{2\pi}$ are expressed in Hz. We consider a typical field amplitude $\omega_0$ that can be fixed, for instance, to $\omega_0=2\pi\times 100$~Hz. We introduce normalized coordinates as follows:
$$
u_x=2\pi\frac{\omega_x}{\omega_0};~u_y=2\pi\frac{\omega_y}{\omega_0};~t'=\frac{\omega_0}{2\pi}t;~\Delta=2\pi\frac{\omega}{\omega_0};\textbf{X}=\frac{\textbf{M}}{M_0}.
$$
In what follows, we omit the prime to simplify the notations. We deduce that the differential system can be expressed in normalized units as:
\begin{equation}\label{eq1}
\begin{cases}
\dot{x}=-\Delta y+(1+\alpha)u_yz \\
\dot{y}=\Delta x-(1+\alpha)u_xz \\
\dot{z}=(1+\alpha)u_xy-(1+\alpha)u_yx
\end{cases}
\end{equation}
with $x^2+y^2+z^2=1$.
The initial state of the dynamics for each spin is the thermal equilibrium point, i.e. $\textbf{X}_0=[0,0,1]^\top$. We consider a control time of the order of 100~ms, that corresponds to a normalized time $t_f'$ of the order of 10. 
The range of variation of the parameter $\Delta$ is $\Delta_0+2\pi [-0.2,0.2]$, where $\Delta_0$ is a frequency value that can be used to shift arbitrarily the interval.
For the purpose of this paper, we assume that $\Delta_0\geq0.4\pi$, meaning that $\Delta\geq0$.

The goal of our \texttt{SPIRED} code is to estimate simultaneously the distributions for the parameters $\alpha$ and $\Delta$ by designing specific controls $(u_x,u_y)$. We consider an ensemble of $N$ spins whose dynamics are governed by Eq.~\eqref{eq1}. 
We assume that the control amplitudes $(u_x,u_y)$ belong to the admissible set $\mathcal{U}=\{(u_x,u_y)\in\mathbb{R}^2\mid|u_x|\leq u_m, |u_y|\leq u_m \}$, where $u_m$ is the maximum amplitude of each component. A simple way to proceed can be described as follows. We consider that the system of $N$ spins is divided into $K_\Delta$ groups, and we associate with the $\ell$-th subgroup a certain value $\Delta_\ell$ and the corresponding probability $P_\star^\Delta(\ell) = \frac{N_{\Delta,\ell}}{N}$, $\ell=1,\dots,K_\Delta$, with $\sum_{\ell=1}^{K_\Delta}P_\star^\Delta(\ell)=1$. 
The probability $P_\star^\Delta(\ell)$ is unknown, which means that the number of elements $N_{\Delta,\ell}$ of each group is to be found. 
Similarly, for the parameter $\alpha$, we have $K_\alpha$ groups with the probabilities $P_\star^\alpha(\ell) = \frac{N_{\alpha,\ell}}{N}$, $\ell=1,\dots,K_\alpha$, with $\sum_{\ell=1}^{K_\alpha}P_\star^\alpha(\ell)=1$ to estimate.

This problem can be viewed as a natural extension of the work~\cite{spinpaper} and leads to the identification of two independent discrete distributions. 
However, this approach has two main drawbacks.
First, the two random variables $\Delta$ and $\alpha$ are assumed to be independent. 
This is a limitation when trying to reconstruct the two unknown distributions, since any possible correlation is a priori neglected.
Second, the final identification problem is nonlinear, since the product of the two distributions would appear. 
This is in contrast with the case of the reconstruction of one single distribution, where the identification problem is quadratic~\cite{spinpaper}.
For these reasons, rather than considering two independent distributions, we work directly with the joint distribution, i.e. 
the system of $N$ spins is divided into $K$ groups  and we associate to each subgroup a pair $(\alpha,\Delta)_\ell$ and the corresponding joint probability $P_\star(\ell) = \frac{N_\ell}{N}$, $\ell=1,\dots,K$, with $\sum_{\ell=1}^{K}P_\star(\ell)=1$. Now, the joint probability $P_\star(\ell)$ is unknown, namely the number of elements $N_\ell$ affected by the pairs $(\alpha,\Delta)_\ell$. 
This approach has the advantage of taking into account correlation effects and the final identification problem remains quadratic. It should be noted that these are acquired at the cost of an increase in the dimension of the unknown object(s), i.e. from two one-dimensional functions to a two-dimensional function.
Finally, we point out that two independent distributions can also be treated as a specific case of joint distributions.

Since we are dealing with an inverse problem, we need to define what quantities can be observed in an experimental setting. In NMR, only the first two coordinates of the magnetization vector can be directly measured.
We do not have accessed directly to the $z$ component due to the strong constant magnetic field applied along this direction~\cite{levitt2013spin}. 
We denote by $\textbf{Y}_{\textbf{u},(\Delta,\alpha)}(t)=[x(t),y(t)]^\top$ the projection of the Bloch vector onto the first two coordinates.
Here, the dependence on $\textbf{u}$ and $(\Delta,\alpha)$ has been explicitly mentioned.
The corresponding experimental realization of this controlled dynamic is obtained at $t=t_f$ and leads to $\textbf{Y}^{\textrm{exp}}_{\textbf{u}}(t_f)=[x^{\textrm{exp}}_{\textbf{u}}(t_f),y^{\textrm{exp}}_{\textbf{u}}(t_f)]^\top$, where $\textbf{Y}^{\textrm{exp}}_{\textbf{u}}(t_f)$ can be viewed as the average at time $t_f$ of the experimental measures of all the spins of the set subjected to the control $\textbf{u}$. The coordinates $x^{\textrm{exp}}_{\textbf{u}}$ and $y^{\textrm{exp}}_{\textbf{u}}$ are those of this measured magnetization vector.

The relation between the theoretical description of the dynamical system to the experimental outcome can be expressed as:
\begin{equation}\label{eqpstar}
\textbf{Y}^{\textrm{exp}}_{\textbf{u}}(t_f)=\sum_{\ell=1}^{K}P_\star(\ell)\textbf{Y}_{\textbf{u},(\Delta,\alpha)_\ell}(t_f),
\end{equation}
in which the two sides of the equation crucially depend on the control $\textbf{u}$. 


In general, one control protocol is not sufficient to obtain an appropriate identification of the unknown $P_\star$, but a set of $\widetilde{K}$ control processes with $\widetilde{K}$ different control functions denoted $\textbf{u}_k$, $k=1,\cdots,\widetilde{K}$, needs to be used. 
On the basis of the experimental outputs, a straightforward way to determine $P_\star$ is to solve the following minimization problem:
\begin{equation}\label{pbmin}
\min_{P\in \mathbb{P}}\sum_{k=1}^{K}\|\textbf{Y}^{\textrm{exp}}_{\textbf{u}_k}(t_f)-\sum_{\ell=1}^{K}P(\ell)\textbf{Y}_{\textbf{u}_k,(\Delta,\alpha)_\ell}(t_f)\|^2,
\end{equation}
where $\|\cdot\|$ denotes the standard Euclidean vector norm, and $\mathbb{P}$ is the convex and closed set of all the possible probability distributions $P$ that satisfy $P(\ell)\geq 0$ for $1\leq \ell\leq K$ and $\sum_{\ell=1}^{K}P(\ell)=1$. 
At this point, it is clear that a key ingredient of the accuracy of the identification process rests on the choice of a set of $\widetilde{K}$ controls $\textbf{u}_k$.
The identification of the number $\widetilde{K}$ of control functions is a difficult task.
The theoretical analysis presented in Sec.~\ref{sec:analysis} shows that the choice $\widetilde{K}=K$ is sufficient.
The GRA algorithm computes exactly $\widetilde{K}=K$ control fields. 
However, we will show that OGRA is capable of reducing (halving) the number $\widetilde{K}$ of control fields while guaranteeing an accurate identification.

Let us now rewrite \eqref{pbmin} in a form that we consider in our \texttt{SPIRED} implementation. We introduce a set $\Phi:=\{\phi_j\}_{j=1}^{K}$ of linearly independent functions $\phi_j:\{1,\dots,K\}\to \mathbb{R}$ such that $\mathbb{P} \subset {\rm span}(\Phi)$, where $\textrm{span}$ denotes the vector space generated by the functions. 
Expressing $P$ as $P(\ell)=\sum_{j=1}^{K}\beta_{j}\phi_j(\ell)$, the minimization problem \eqref{pbmin} becomes:
\begin{equation}\label{eqbeta}
\min_{\beta\in \widehat{\mathbb{R}}^K}\sum_{k=1}^{K}\|\textbf{Y}^{\textrm{exp}}_{\textbf{u}_k}(t_f)-\sum_{\ell,j=1}^{K}\beta_j\phi_j(\ell)\textbf{Y}_{\textbf{u}_k,(\Delta,\alpha)_\ell}(t_f)\|^2,
\end{equation}
where the vector $\beta=(\beta_j)_{j=1}^K$ is taken in $\widehat{\mathbb{R}}^K$, a subset of $\mathbb{R}^K$, so that $P=\sum_j\beta_j\phi_j$ is a probability distribution. 

We show in this study that GRA allows us to design a set of controls $\textbf{u}_k$ that makes \eqref{eqbeta} solvable and well conditioned. 
The algorithm is composed of two steps, namely an offline and an online steps. 
In the offline step, GRA computes the controls $\textbf{u}_k$.
In this step, only the theoretical model is needed without any experimental input. 
The derived controls are used in the online step in which the different magnetization vectors are measured and the minimization problem~\eqref{pbmin} is solved.
Note that the controls are the same for any probability distribution
to identify and only depend on the model system under study.
Finally, we point out that in our algorithms the duration of each control pulse is considered as a variable to be optimized together with its amplitude.
In particular, we assume that the controls are constant in time, i.e. $\textbf{u}(t)\equiv\textbf{u}\in\mathbb{R}^2$, and that we can freely choose the control time up to a fixed maximum value $t_f$.
Since the initial state is an equilibrium point, this is equivalent to turning on the control at a time $t\geq 0$. We show in Sec.~\ref{sec:analysis} that these hypotheses are sufficient for the different examples to identify the probability distributions.
The generality of GRA allows one to tackle this situation in a straightforward manner.

\section{Greedy reconstruction algorithms}
\label{sec:greedy}
We present in this section the GRA in its classical form and in an optimized extension called optimized GRA (OGRA). 

GRA computes the controls $\textbf{u}_k$ and the corresponding control times $t_k$ by solving a sequence of fitting-step and discriminatory-step problems. 
The goal of the fitting step is to identify a defect of the system, namely a nontrivial kernel of a certain matrix $W$ introduced below, while the discriminatory step designs a new control which is aimed to correct this discrepancy and to eliminate the identified nontrivial kernel. 
The explicit formulation of GRA is presented in Alg.~\ref{algo:GRA} and is given in terms of the function $\textbf{h}^{(k)}$ defined by:
\begin{equation}\label{hk}
\textbf{h}^{(k)}(\beta,\textbf{u},t)=\sum_{\ell=1}^{K}\sum_{j=1}^k\beta_j\phi_j(\ell)\textbf{Y}_{\textbf{u},(\Delta,\alpha)_\ell}(t),
\end{equation}
for any $\beta$ in $\mathbb{R}^k$.
\begin{algorithm}[t]
	\caption{Greedy Reconstruction Algorithm (GRA)}
	\begin{algorithmic}[1]\label{algo:GRA} 
	\begin{small}
		\REQUIRE A set of $K$ linearly independent functions $\Phi=\{\phi_1,\ldots,\phi_{K}\}$. 
		\STATE Compute the control $\textbf{u}_1$ and the control time $t_1$ by solving 
        \begin{equation}\label{eq: initialization}
        \max_{\substack{\textbf{u}\in\mathcal{U}\\ t\in[0,t_f]}} \|\textbf{h}^{(1)}(1,\textbf{u},t)\|^2,
        \end{equation}
		\FOR{ $k=1,\dots, K-1$ }
		\STATE \underline{Fitting step}: Find $\beta^k=(\beta^{k}_j)_{j=1,\dots,k}$ that solves 
        \begin{equation}\label{eq: fitting step}
		\min_{\beta \in \mathbb{R}^k}\sum_{m=1}^{k}\| \textbf{h}^{(K)}(\ve_{k+1},\textbf{u}_m,t_m)-\textbf{h}^{(k)}(\beta,\textbf{u}_m,t_m)\|^2,
		\end{equation}
		where $\ve_{k+1}$ is the $(k+1)$-th canonical vector in $\mathbb{R}^K$.
		\STATE \underline{Discriminatory step}: Find $\textbf{u}_{k+1}$ and $t_{k+1}$ that solves 
        \begin{equation}\label{eq: discriminatory step}
		\max_{\substack{\textbf{u}\in\mathcal{U}\\ t\in[0,t_f]}}\|\textbf{h}^{(K)}(\ve_{k+1},\textbf{u},t)-\textbf{h}^{(k)}(\beta^k,\textbf{u},t)\|^2.
		\end{equation}
		\ENDFOR
	\end{small}
	\end{algorithmic}
\end{algorithm}
Notice that the fitting step minimizes over the full space $\mathbb{R}^k$, meaning that $\sum_j\beta_j\phi_j$ does not have to be a probability distribution.
However, this is a restrictive condition. On the contrary, it allows the algorithm to find and correct more nontrivial kernels than might be necessary.

One main characteristic of GRA is that the set $\Phi$ and its order have to be fixed a-priori.
However, the choice and order of $\Phi$ can have a crucial impact on the outcome of the algorithm as shown in \cite[Sec. 5.3]{BCS2021}.
Hence, the idea of OGRA, which is stated in Algorithm~\ref{algo:OGRA}, is to make the algorithm independent of the choice and order of the set $\Phi$.
Additionally, it aims at avoiding the computation of unnecessary control functions.
This is achieved by two adaptations in GRA.
The first one is that in each step one does not only consider the next element (the next canonical vector $\ve_{k+1}$) in the set, but all remaining elements (the canonical vectors $\ve_{k+\ell}$ for all $1\leq\ell\leq K-k$) in parallel.
Hence, it is also possible to enlarge the set $\Phi$ (and thus enlarge $K$) by additional functions $\phi_k$ which do not have to be linearly independent.
In order to progressively remove linearly dependent functions in the set and to avoid scaling issues, all remaining basis elements are orthonormalized against the already selected ones after each iteration.
The second adaptation is the introduction of two tolerances $\textrm{tol}_1,\textrm{tol}_2>0$.
The first tolerance $\textrm{tol}_1$ is used as a stopping criterion.
\begin{algorithm}[H]
	\caption{Optimized Greedy Reconstruction Algorithm (OGRA)}
	\begin{small}
		\begin{algorithmic}[1]\label{algo:OGRA}
			\REQUIRE A set of $K$ functions $\Phi=\{\phi_1,\ldots,\phi_{K}\}$
			and two tolerances $\textrm{tol}_1>0$ and $\textrm{tol}_2>0$.
			\STATE Compute $\textbf{u}_1$, $t_1$ and the index $\ell_1$ by solving the initialization problem
			\begin{equation*}
            \max_{\ell\in\{1,\ldots,K\}}\max_{\substack{\textbf{u}\in\mathcal{U}\\ t\in[0,t_f]}} \|\textbf{h}^{(1)}(\ve_\ell,\textbf{u},t)\|^2,
            \end{equation*}
			\STATE Swap $\phi_1$ and $\phi_{\ell_1}$ in $\Phi$, and set $k=1$, $\widetilde{K}=1$, and $f_{max} = \|\textbf{h}^{(1)}(\ve_\ell,\textbf{u}_1,t_1)\|^2$.
			\WHILE{ $k\leq K-1$ and \begin{small}$f_{max}>\textrm{tol}_1$\end{small} }
			\FOR{$\ell=k+1,\ldots,K$}
			\STATE Orthonormalize all elements $(\phi_{k+1},\ldots,\phi_K)$ with respect to $(\phi_1,\ldots,\phi_k)$, remove any that are linearly dependent and update $K$ accordingly.
			\STATE \underline{Fitting step}: Find $(\beta^{\ell}_j)_{j=1,\dots,k}$ that solve the problem
			\begin{equation*}
		      \min_{\beta \in \mathbb{R}^k}\sum_{m=1}^{k}\| \textbf{h}^{(K)}(\ve_{k+\ell},\textbf{u}_m,t_m)-\textbf{h}^{(k)}(\beta,\textbf{u}_m,t_m)\|^2,
		      \end{equation*}
			and set $f_\ell = \sum_{m=1}^{k} \begin{small}\| \textbf{h}^{(K)}(\ve_{k+\ell},\textbf{u}_m,t_m)-\textbf{h}^{(k)}(\beta^\ell,\textbf{u}_m,t_m)\|^2\end{small}$.
			\ENDFOR
			\IF {$\max_{\ell=k+1,\dots,K} f_\ell > {\rm tol}_2$ }
			\STATE Set $\ell_{k+1} = \textrm{arg}\max_{\ell=k+1,\dots,K} f_\ell$.
			\ELSE
			\STATE \underline{Extended discriminatory step}: Find $\textbf{u}_{k+1}$, $t_{k+1}$ and $\ell_{k+1}$ that solve
			\begin{equation*}
		      \max_{\ell\in\{k+1,\ldots,K\}}\max_{\substack{\textbf{u}\in\mathcal{U}\\ t\in[0,t_f]}}\|\textbf{h}^{(K)}(\ve_{k+\ell},\textbf{u},t)-\textbf{h}^{(k)}(\beta^\ell,\textbf{u},t)\|^2.
		      \end{equation*}
            \STATE Set $\widetilde{K}=\widetilde{K}+1$.
			\ENDIF
			\STATE Swap $\phi_{k+1}$ and $\phi_{\ell_{k+1}}$ in $\Phi$.
            \STATE Set $f_{max}=\|\textbf{h}^{(K)}(\ve_{k+\ell_{k+1}},\textbf{u}_{k+1},t_{k+1})-\textbf{h}^{(k)}(\beta^{\ell_{k+1}},\textbf{u}_{k+1},t_{k+1})\|^2$.
            \STATE Set $k = k+1$.
			\ENDWHILE
		\end{algorithmic}
	\end{small}
\end{algorithm}
The algorithm terminates if the function value in the initialization or any of the discriminatory steps (denoted by $f_\ell$ in Alg.~\ref{algo:OGRA}) is too small, thus not adding new information.
The second tolerance $\textrm{tol}_2$ is used to skip the computation of a new control field in the discriminatory step, if the minimum cost function value computed by the fitting step is not small enough.
If this function value is large, then there already exists a control function that discriminates between the two distributions $\phi_{k+\ell}$ and $\sum_{j=1}^\ell \beta^\ell_j\phi_j$.
Notice that setting $\textrm{tol}_2$ to a very small value is reasonable if the final identification problem is quadratic.
In this case, one can prove that a nonzero cost function value in the fitting step implies that one does not need to compute a new control for the corresponding set element (compare with \cite{BCS2021}).
However, if the final identification problem is not quadratic, then it can make sense to set $\textrm{tol}_2$ to a larger value.
In conclusion, the main adaptations of OGRA in lines 3 and 8-9 allow the algorithm to reduce the number of computed controls $\widetilde{K}$, meaning that $\widetilde{K}<K$.
On the other hand, as we mentioned before, GRA is designed to always compute exactly $\widetilde{K}=K$ controls.
The numerical implementation of GRA and OGRA is presented and discussed in the following sections.

\section{The \texttt{SPIRED} code}\label{sec:code}
\subsection{Structure of the code}
In this section, we provide a full list of all MATLAB functions contained in the \texttt{SPIRED} code.
Inside the \texttt{SPIRED} folder the user can find the \texttt{main} routine used to run the \texttt{SPIRED} code, as well as the routines that run GRA and OGRA, and that solve their sub-problems (see Tab. \ref{tab:GRAroutines}).
\begin{table}[t]
\centering
\begin{tabular}{ l l }
	\hline
	\quad \textbf{GRA routines} & \textbf{Description}  \\
	\hline 
	\quad \texttt{main} & Main function used to run the code.  \\  
	\quad \texttt{GRA} & Greedy reconstruction algorithm.\\
	\quad \texttt{OGRA} &    Optimized greedy reconstruction algorithm.\\
	\quad \texttt{discriminatory\_step} & Routine that solves the initialization and\\
	& discriminatory step problem using MATLAB's \\
    & \textit{fmincon}-solver.\\
	\quad \texttt{fitting\_step} & Routine that solves the fitting step problem.\\
	\quad \texttt{orthonormalize} & Routine that orthonormalizes all remaining \\
	&basis elements after each iteration of OGRA.\\
    \quad \texttt{SVD\_solver} & Routine that solves the fitting step problem\\
    &using the singular value decompostion (SVD).
\end{tabular}
\caption{Routines related to the greedy reconstruction algorithm.}
\label{tab:GRAroutines}
\end{table}
Additionally, the \texttt{SPIRED} folder contains the routines that generate the synthetic experimental data for the true parameter probability distribution, and that solve the final identification problem \eqref{pbmin} (see Tab.~\ref{tab:Reconroutines}).
\begin{table}[t]
	\centering
	\begin{tabular}{ l l }
		\hline
		\quad \textbf{Reconstruction routines} & \textbf{Description}  \\
		\hline
		\quad \texttt{generate\_data}& Routine that generates the experimental \\
		& realizations for all computed controls.\\
		\quad \texttt{reconstr} & Routine that either solves the final \\
        & identification problem \eqref{eqbeta} using a second\\
        & order interior-point algorithm, or the\\
		& compact form (compare \eqref{eq: pbminW} in Section \ref{sec:analysis}) \\
		& using a solver based on the SVD.
	\end{tabular}
 \caption{Routines related to the reconstruction of the probability distribution.}
 \label{tab:Reconroutines}
\end{table}
Notice that the both the \texttt{discriminatory\_step} and the \texttt{reconstr} routine use MATLAB's \textit{fmincon}-solver, which requires MATLAB's Optimization Toolbox to be installed.

There are also three subfolders labeled ``Test1'', ``Test2'' and ``Test3''.
These contain three test problems the user can choose from.
``Test1'' corresponds to the problem discussed in this paper.
``Test2'' is the same as ``Test1'', but only considers a control in the x direction (in other words $\textbf{u}_y=0$ for all control fields).    
Finally, ``Test3'' corresponds to the problem investigated in \cite{spinpaper}, where the resonance offset $\Delta$ is fixed and one attempts to reconstruct only the control inhomogeneity parameter $\alpha$.
Each of these ``Test'' folders contains routines to set the input variables, describe the cost function and its gradient for the discriminatory-step problem, and solve the corresponding dynamical system (see Tab.~\ref{tab:Testroutines}).
\begin{table}[t]
	\centering
	\begin{tabular}{ l l }
        \hline
		\quad \textbf{Test routines} & \textbf{Description}  \\
		\hline
		\quad \texttt{starting} & Input function.\\
		\quad \texttt{fun\_discriminatory} & Function computing the cost functional and \\
		& gradient for the discriminatory step problem.\\
		\quad \texttt{NMR\_solver} & Routine that solves the (normalized) dynamical \\
		& system via direct calculations of the exponential\\
		& matrix (compare the proof of Theorem \ref{thm: Positive Discriminatory Step}).\\
	\end{tabular}
 \caption{Routines related to the test problems.}
 \label{tab:Testroutines}
\end{table}
They also each contain two routines used to plot the results and condition number of the reconstruction process (see Tab.~\ref{tab:Plotroutines}).
\begin{table}[t]
	\centering
	\begin{tabular}{ l l }
		\hline
		\quad \textbf{Plotting routines} & \textbf{Description}  \\
		\hline
		\quad \texttt{plot\_reconstr} & Routine that plots the true and reconstructed \\
		& probability distributions for the two control sets.\\
		\quad \texttt{plot\_condition} & Routine that produces a table containing the\\
        & condition numbers corresponding to the \\
        & reconstruction process.
	\end{tabular}
 \caption{Routines plotting the results for the test problem.}
 \label{tab:Plotroutines}
\end{table}

\subsection{Usage of the code}
Here we illustrate the working procedure of the \texttt{SPIRED} code with an example.
The user needs to define the test problem in the function \texttt{main}, which is used to initialize the procedure.
\begin{lstlisting}[style=Matlab-editor,basicstyle=\small]
function [ controls, bases, model, results ] = main
% STEP 1: Choice of the Problem;
addpath('Test1');
% STEP 2: Assemble problem variables;
[ model, bases, options ] = starting ( );
% STEP 3: Run!
[controls.GRA, results.GRA] = GRA( bases.GRA, model, options);
[controls.OGRA, bases.OGRA, results.OGRA] = OGRA( bases.OGRA, model, options);
% STEP 4: Compute (synthetic) experimental data;
Y_exp.GRA  = generate_data( controls.GRA, model );
Y_exp.OGRA = generate_data( controls.OGRA, model );
% STEP 5: Solve the final identification problem; 
...
\end{lstlisting}
In particular, at the ``STEP 1'' the user needs to define the path of the folder containing the test routines.
Then, the user can define the input variables in the function \texttt{starting}, which is listed exemplary for the first test problem in the following.
\begin{lstlisting}[style=Matlab-editor,basicstyle=\small]
function [ model, bases, options ] = starting ( )
% STEP 1: Input variables;
% control bounds and maximum control time;
um  = 10;
tf  = 16;
% variables for the unknown parameters
Delta0          = 4*pi;
Delta1          = 0.2;
Delta_interval  = Delta0 + 2*pi.*[-Delta1, Delta1];
alpha_interval  = [-0.2, 0.2];
% number of grid points for the unknown parameters
nr_alphas       = 10;
nr_Deltas       = 10;
% open the file to get the input probability distribution
load('Test1/Distributions/Gaussian.mat', 'P_star')
% number of spins;
nr_spins = 100000;
% options for GRA and OGRA
iterations      = nr_alphas*nr_Deltas;
Display_GRA     = 'off';
flag_orth       = 1;
% numerical parameters for OGRA;
tol_OGRA_fit    = 1e-4;
tol_OGRA_discr  = 1e-14;
% tolerance for the SVD solver in the fitting step (and optionally for the reconstruction solver)
tol_svd         = 1e-10;
% optimization method for the final identification problem
solver          = 'fmincon';
\end{lstlisting}
At ``STEP 1'' in this function, the user can define the input variables and the path to the .mat file containing the true probability distribution $P_\star$.
The input parameters related to the problem are
\begin{itemize}\itemsep0em
	\item \texttt{um:} bound $u_m$ for the absolute value of the control amplitudes;
	\item \texttt{tf:} maximum normalized control time $t_f$;
	\item \texttt{Delta0:} frequency shift $\Delta_0$ for the normalized resonance offset interval;
	\item \texttt{Delta1:} width $\Delta_1$ of the normalized resonance offset interval;
	\item \texttt{Delta\_interval:} interval boundaries for the normalized resonance offset $\Delta$;
	\item \texttt{alpha\_interval:} interval boundaries for the control field inhomogeneity parameter $\alpha$;
	\item \texttt{nr\_alphas:} number of grid points in the direction of $\alpha$ for the joint discrete parameter probability distribution of $\alpha$ and $\Delta$;
	\item \texttt{nr\_Deltas:} number of grid points in the direction of $\Delta$ for the joint discrete parameter probability distribution of $\alpha$ and $\Delta$;
    \item \texttt{nr\_spins:} number of spins in the system;
	\item \texttt{iterations:} (maximum) number of iterations performed by GRA and OGRA; for any full basis of the discrete parameter space, the obvious choice is the total number of discretization points, which is the product of \texttt{nr\_alphas} and \texttt{nr\_Deltas};
	\item \texttt{Display\_GRA} Display option to print information about the current iteration of GRA and OGRA in the command window; can be set to \texttt{'off'} to display no output, \texttt{'iter'} to show the current substep of GRA and OGRA, or \texttt{'iter-detailed'} to also show the current optimization problem during the substeps of OGRA;
	\item \texttt{flag\_orth:} flag variable that turns the orthonormalization of the remaining basis elements during OGRA on or off;
	\item \texttt{tol\_OGRA\_fit:} tolerance $\textrm{tol}_2$ for OGRA;
	\item \texttt{tol\_OGRA\_discr:} tolerance $\textrm{tol}_1$ for OGRA;
    \item \texttt{tol\_svd:} tolerance for the SVD solver, used in the fitting step and (optionally) for the final identification problem;
    \item \texttt{solver:} optimization method used to solve the final identification problem \eqref{eqbeta}; can be set to \texttt{'fmincon'} to solve \eqref{eqbeta} using the second-order interior-point algorithm of MATLAB's \textit{fmincon}-solver, or to \texttt{'svd'} to solve a compact form of the problem (compare \eqref{eq:Reconstruction} in Section \ref{sec:numerics}) using the SVD solver;
\end{itemize}
The .mat file has to contain the variable \texttt{P\_star}, which is the vectorized true joint probability distribution $P_\star$.
If the user is considering a true experimental (laboratory) setup, meaning that they perform real experiments for the different controls to obtain the experimental data and that the true probability distribution is truly unknown, they should replace ``STEP 4'' in the "main.m" file with a load command to fetch the real experimental data.

Finally, to run the code the user has to write on the MATLAB prompt the following
\begin{lstlisting}[style=Matlab-editor,basicstyle=\small]
>> [ controls, bases, model, results ] = main
\end{lstlisting}
After the computations, the routine saves the results in the MATLAB work\-space (as documented in the code) and plots the reconstructed probability distributions and their difference to the true one, as well as the condition numbers for different mesh sizes.

In particular, the results obtained by running ``Test1'' are the true and reconstructed probability distributions for the control fields generated by GRA and OGRA, shown in Fig.~ \ref{fig:gauss_reconstructed}.
\begin{figure}[t]
	\centering
    \includegraphics[width=1\linewidth]{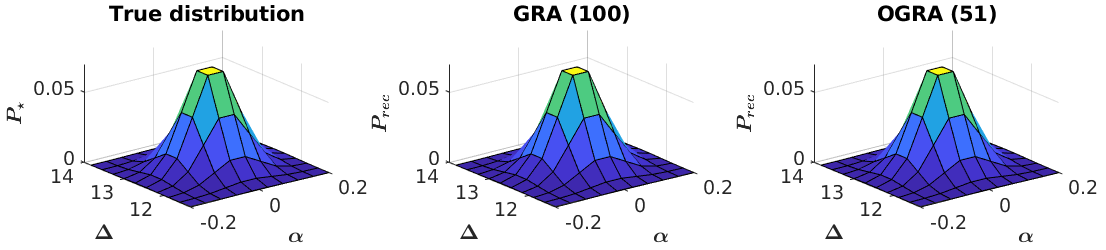}
    \caption{The plot on the left shows the true Gaussian probability distribution for $K=100$ uniform mesh points. The plots in the middle and on the right contain the reconstructed distributions for the control sets generated by GRA (containing $100$ control fields) and OGRA (containing $51$ control fields).}
    \label{fig:gauss_reconstructed}
\end{figure}
In Fig.~\ref{fig:Test1_err} we show the difference between the true and reconstructed distributions for the two control field sets.
\begin{figure}[t]
	\centering
	\includegraphics[width=0.9\linewidth]{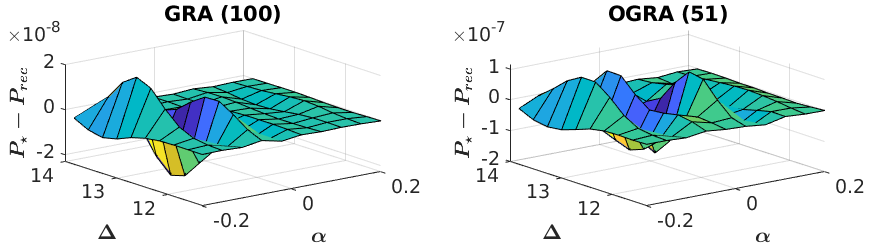}
	\caption{Difference between the true probability distribution $P_\star$ and the distributions $P_{rec}$ reconstructed using the control sets generated by GRA (left) and OGRA (right). In brackets are the number of control fields for each set.}
	\label{fig:Test1_err}
\end{figure}
Additionally, the routine provides a table containing the exact condition numbers corresponding to GRA and OGRA.
Since the solver for the discriminatory step problem is initialized with a random vector, there may be small variations in some results, without changing the overall outcome.

Examples of all figures produced by the different test problems are also provided in the folder ``Results'' that can be found in the corresponding ``Test''-folder.
There one can also find the .mat files containing the set of random controls for each test problem, loaded in the \texttt{main}.

\section{Convergence analysis}\label{sec:analysis}
In this section, we prove that the controls generated by GRA and OGRA make possible the identification of the unknown probability distributions of the parameters $\Delta$ and $\alpha$, i.e. they make problem \eqref{pbmin} uniquely solvable.

We start by recalling that problem \eqref{pbmin} is equivalent to \eqref{eqbeta}.
Assuming that $P_\star$ can be written as $P_\star(\ell)=\sum_{j=1}^{K}\beta_{\star,j}\phi_j(\ell)$, we can write equation~\eqref{eqbeta} in a compact form as follows:
\begin{equation}\label{eq: pbminW}
\min_{\beta\in \widehat{\mathbb{R}}^K}\langle \beta_\star-\beta|W|\beta_\star-\beta\rangle,
\end{equation}
where $W:=\sum_kW(\textbf{u}_k,t_k)$ is the sum of symmetric and positive semi-definite $K\times K$- matrices whose elements are defined as:
\begin{equation}\label{eq: W}
[W(\textbf{u}_k,t_k)]_{\ell,j} :=\langle \gamma_{\ell}(\textbf{u}_k,t_k)|\gamma_j(\textbf{u}_k,t_k)\rangle
\end{equation}
with
\begin{equation}\label{eq: gamma}
\gamma_j(\textbf{u}_k,t_k):=\sum_\ell\phi_j(\ell)\textbf{Y}_{\textbf{u}_k,(\alpha_\ell,\Delta_\ell)}(t_k).
\end{equation}
Since the set of vectors $\beta$ is a convex subset of $\mathbb{R}^{K}$, we deduce that the problem is uniquely solvable if the matrix $W$ is positive definite. In the case $W$ has a non-trivial kernel, infinitely many solutions may exist which lead to wrong probability distributions different from the experimental one $P_\star$. We stress that the non-triviality of the kernel depends completely on the choice of the controls $\textbf{u}_k$ and the corresponding control times $t_k$. 

Using the notation \eqref{eq: W}-\eqref{eq: gamma}, we can now also rewrite the subproblems of GRA in terms of the matrix $W$. The initialization problem \eqref{eq: initialization} can be written as
\begin{equation}\label{eq: initialization W}
\max_{\substack{\textbf{u}\in\mathcal{U}\\ t\in[0,t_f]}} |[W(\textbf{u},t)]_{1,1}|^2.
\end{equation}
The fitting step problem \eqref{eq: fitting step} is equivalent to
\begin{equation}\label{eq: fitting step W}
\min_{\beta \in \mathbb{R}^k}\langle\textbf{v}_{\beta}|[W^k]_{[1:k+1,1:k+1]}|\textbf{v}_{\beta}\rangle,
\end{equation}
where $W^k:=\sum_{m=1}^{k}W(\textbf{u}_m,t_m)$ and $\textbf{v}_{\beta}:=[\beta^\top,-1]^\top$.
Finally, the discriminatory step problem \eqref{eq: discriminatory step} can be written as
\begin{equation}\label{eq: discriminatory step W}
\max_{\substack{\textbf{u}\in\mathcal{U}\\ t\in[0,t_f]}}\langle\textbf{v}_{\beta^k}|[W(\textbf{u},t)]_{[1:k+1,1:k+1]}|\textbf{v}_{\beta^k}\rangle.
\end{equation}
A direct interpretation of these reformulated problems is that each control $\textbf{u}_k$ generated by GRA at iteration $k$ ensures that $\langle \ve_k|W|\ve_k\rangle>0$.
Iteratively, this implies that $\langle\textbf{w}|W|\textbf{w}\rangle>0$ for any $\textbf{w}\in\mathbb{R}^K$ which is equivalent to $W$ being positive definite.

In more details, the first control $\textbf{u}_1$ and the control time $t_1$ are chosen by the initialization \eqref{eq: initialization W} such that the first upper left entry of $W(\textbf{u}_1,t_1)$ is positive.
This guarantees that $\langle \ve_1|W|\ve_1\rangle>0$ since
$$\langle \ve_1|W|\ve_1\rangle=\sum_{j=1}^K\langle \ve_1|W(\textbf{u}_j,t_j)|\ve_1\rangle\geq [W(\textbf{u}_1,t_1)]_{1,1}>0, $$
where we used that $W(\textbf{u},t)$ is positive semi-definite for any $\textbf{u}$ and $t$.
Assume now that the upper left $2\times2$-submatrix of $W^1=W(\textbf{u}_1,t_1)$ is not positive definite. 
Then it has a one-dimensional kernel spanned by a vector $\textbf{v}_{\beta^1}:=[\beta^1,-1]^\top\in\mathbb{R}^2$ (see \cite[Lem. 5.3]{BCS2021}).
The corresponding scalar $\beta^1$ is clearly the unique solution to the fitting step problem \eqref{eq: fitting step W} for $k=1$.
Now, the discriminatory step problem \eqref{eq: discriminatory step W} attempts to find a control $\textbf{u}_2\in\mathcal{U}$ and a control time $t_2\in[0,t_f]$ such that the vector $\textbf{v}_{\beta^1}$ is not in the kernel of the upper left $2\times2$-submatrix of $W(\textbf{u}_2,t_2)$.
If this is successful then the upper left $2\times2$-submatrix of $W^2=W(\textbf{u}_1,t_1)+W(\textbf{u}_2,t_2)$ is positive definite.
This also implies that $\langle \ve_2|W|\ve_2\rangle>0$.
Repeating this procedure for $k=2,\ldots,K-1$, we obtain $\langle \ve_k|W|\ve_k\rangle>0$ for all $k\in{1,\ldots,K}$, which guarantees that $W$ is positive definite.
We summarize the arguments above in the following theorem.
\begin{theorem}\label{thm:uniqsolution}
    Let $\{(\textbf{u}_k,t_k)\}_{k=1}^K$ be a set of controls and corresponding control times generated by GRA, such that $[W(\textbf{u}_1,t_1)]_{1,1}>0$. Let $\beta^k$ be the solution to the fitting step problem \eqref{eq: fitting step W} for $k=1,\ldots,K-1$, such that the vectors $\textbf{v}_{\beta^k}=[(\beta^k)^\top,-1]^\top$ are not in the kernel of $[W(\textbf{u}_{k+1},t_{k+1})]_{[1:k+1,1:k+1]}$.
    Then the matrix $W=\sum_kW(\textbf{u}_k,t_k)$ is positive definite.
\end{theorem}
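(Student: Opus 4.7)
My plan is to prove the theorem by induction on $k$, showing that the upper-left $(k+1)\times(k+1)$ submatrix of $W^k=\sum_{m=1}^k W(\mathbf{u}_m,t_m)$ is positive definite for every $k=1,\ldots,K-1$. Since $W=W^K$ is a $K\times K$ matrix and $W^K=W^{K-1}+W(\mathbf{u}_K,t_K)$, applying the induction at $k=K-1$ will yield positive definiteness of $W$ itself. The whole argument essentially unpacks the chain of implications sketched informally in the paragraphs preceding the theorem.

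For the base case $k=1$, I would note that $[W^1]_{1,1}=[W(\mathbf{u}_1,t_1)]_{1,1}>0$ is precisely the hypothesis of the theorem, so the $1\times 1$ principal block is trivially positive definite. For the inductive step, I would assume that $A_k:=[W^k]_{[1:k+1,1:k+1]}$ is positive semi-definite (which follows because each $W(\mathbf{u}_m,t_m)$ is) and that its upper-left $k\times k$ block is positive definite (the inductive hypothesis, carried over from the previous step where it was the full $(k+1)\times(k+1)$ matrix). Then I would invoke the structural lemma cited in the excerpt (\cite[Lem. 5.3]{BCS2021}): because the leading $k\times k$ block of $A_k$ is positive definite while $A_k$ itself is only positive semi-definite, either $A_k$ is already positive definite, or its kernel is one-dimensional and spanned by a vector whose last coordinate is nonzero, which can be normalized to $\mathbf{v}_{\beta^k}=[(\beta^k)^\top,-1]^\top$. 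In the former subcase, adding the positive semi-definite term $[W(\mathbf{u}_{k+1},t_{k+1})]_{[1:k+1,1:k+1]}$ preserves positive definiteness and we are done. In the latter subcase, I would first note that the vector $\beta^k$ realizing the kernel coincides with the minimizer of the fitting step \eqref{eq: fitting step W}, so the hypothesis of the theorem applies to it.

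To close the inductive step in the nontrivial subcase I would take an arbitrary $\mathbf{w}\in\mathbb{R}^{k+1}$ and use the decomposition
\begin{equation*}
\langle \mathbf{w}|[W^{k+1}]_{[1:k+1,1:k+1]}|\mathbf{w}\rangle = \langle \mathbf{w}|A_k|\mathbf{w}\rangle + \langle \mathbf{w}|[W(\mathbf{u}_{k+1},t_{k+1})]_{[1:k+1,1:k+1]}|\mathbf{w}\rangle,
\end{equation*}
which is a sum of two non-negative terms. If the sum vanishes, both summands vanish; the first vanishing forces $\mathbf{w}=c\,\mathbf{v}_{\beta^k}$ for some scalar $c$, and the second vanishing, together with the theorem's assumption that $\mathbf{v}_{\beta^k}$ is not in the kernel of $[W(\mathbf{u}_{k+1},t_{k+1})]_{[1:k+1,1:k+1]}$, forces $c=0$, hence $\mathbf{w}=0$. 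This establishes positive definiteness of the $(k+1)\times(k+1)$ principal block of $W^{k+1}$, completing the induction.

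The main obstacle is really the legitimacy of the one-dimensional kernel characterization: it rests on the lemma of \cite{BCS2021}, so I would make sure to state its hypotheses carefully and verify them before applying it (in particular, the positive definiteness of the leading $k\times k$ block, which is exactly what the inductive hypothesis from the previous step delivers). Apart from this, the argument is a clean telescoping of the observation that adding a positive semi-definite matrix that is strictly positive on the one-dimensional kernel of a positive semi-definite matrix produces a positive definite sum; every remaining manipulation is routine bookkeeping about principal submatrices.
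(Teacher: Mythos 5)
Your proof is correct and follows essentially the same route as the paper: an induction on the leading principal blocks of the partial sums $W^k$, invoking the one-dimensional-kernel lemma of \cite{BCS2021}, identifying that kernel vector with the fitting-step minimizer $\textbf{v}_{\beta^k}$, and using that adding a positive semi-definite block which does not annihilate $\textbf{v}_{\beta^k}$ makes the enlarged block positive definite. Only note the indexing slip in your opening sentence: the inductive claim you actually prove (and need) is that $[W^{k+1}]_{[1:k+1,1:k+1]}$ -- equivalently $[W^{k}]_{[1:k,1:k]}$ -- is positive definite, not $[W^{k}]_{[1:k+1,1:k+1]}$, consistent with your base case and with the decomposition you use to close the inductive step.
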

It remains to show that the discriminatory step can always find a control such that the vector $\textbf{v}_{\beta^k}$ is not in the kernel of $[W(\textbf{u},t)]_{[1:k+1,1:k+1]}$.
In fact, it is sufficient to show that for any $k\in\{1,\ldots,K\}$ there exists a control $\textbf{u}\in\mathcal{U}$ and a $t\in[0,t_f]$ such that $\langle\textbf{v}_{\beta^k}|[W(\textbf{u},t)]_{[1:k+1,1:k+1]}|\textbf{v}_{\beta^k}\rangle>0$.
We show in Theorem~\ref{thm: Positive Discriminatory Step} that this is valid in the context of this paper.
\begin{theorem}\label{thm: Positive Discriminatory Step}
    Let $k\in\{1,\ldots,K-1\}$, $W^k_{[1:k,1:k]}$ be positive definite, $\beta^k$ the solution to the fitting-step problem~\eqref{eq: fitting step W}, and $\textbf{v}_{\beta^k} =[(\beta^k)^\top,-1]^\top$. Then any solution $(\textbf{u},t)$ to the discriminatory-step problem~\eqref{eq: discriminatory step W} satisfies
	\begin{equation*}
	\langle\textbf{v}_{\beta^k}|W_{[1:k+1,1:k+1]}(\textbf{u},t)|\textbf{v}_{\beta^k}\rangle
	=\|\textbf{h}^{(K)}(\ve^{k+1},\textbf{u};t) - \textbf{h}^{(k)}(\beta^k,\textbf{u};t)\|^2>0.
	\end{equation*}
\end{theorem}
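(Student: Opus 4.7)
My plan is to first verify the displayed identity by a direct expansion, then argue strict positivity of the maximum by exhibiting a single favorable pair $(\textbf{u},t)$. The identity is purely algebraic: using \eqref{eq: W}--\eqref{eq: gamma}, the quadratic form $\langle\textbf{v}_{\beta^k}|W_{[1:k+1,1:k+1]}(\textbf{u},t)|\textbf{v}_{\beta^k}\rangle$ collapses to $\bigl\|\sum_{j=1}^{k+1}(\textbf{v}_{\beta^k})_j\,\gamma_j(\textbf{u},t)\bigr\|^2$. Recognizing $\gamma_{k+1}(\textbf{u},t) = \textbf{h}^{(K)}(\ve_{k+1},\textbf{u},t)$ and $\sum_{j=1}^k \beta_j^k\,\gamma_j(\textbf{u},t) = \textbf{h}^{(k)}(\beta^k,\textbf{u},t)$, and noting that the last entry $-1$ of $\textbf{v}_{\beta^k}$ produces the minus sign in front of $\textbf{h}^{(K)}$, immediately yields the right-hand side.

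Since the objective is continuous on the compact set $\mathcal{U}\times[0,t_f]$, the maximum is attained, so it suffices to exhibit one pair $(\textbf{u},t)$ at which the right-hand side is strictly positive. Setting $c_\ell := \phi_{k+1}(\ell) - \sum_{j=1}^k \beta_j^k\,\phi_j(\ell)$, the right-hand side equals $\bigl\|\sum_{\ell=1}^K c_\ell\,\textbf{Y}_{\textbf{u},(\Delta_\ell,\alpha_\ell)}(t)\bigr\|^2$. I argue by contradiction: if this vanished for every $(\textbf{u},t)\in\mathcal{U}\times[0,t_f]$, I will deduce $c_\ell = 0$ for all $\ell$; this gives $\phi_{k+1} = \sum_{j=1}^k \beta_j^k\,\phi_j$ as functions on $\{1,\ldots,K\}$, contradicting the hypothesis that $\Phi$ is linearly independent.

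The core obstacle is precisely this last implication: turning the hypothetical identity into $c_\ell = 0$. My strategy is to use the explicit structure of \eqref{eq1} for constant controls. Specialize to $\textbf{u}=(u,0)$ with $u\in(0,u_m]$; then \eqref{eq1} becomes a linear autonomous ODE whose generator is skew-symmetric, and Rodrigues' formula (the basis of the \texttt{NMR\_solver} routine) gives
\[
y_\ell(t) \;=\; -\frac{(1+\alpha_\ell)\,u}{\omega_\ell(u)}\,\sin\!\bigl(\omega_\ell(u)\,t\bigr), \qquad \omega_\ell(u) \;:=\; \sqrt{\Delta_\ell^{\,2} + (1+\alpha_\ell)^2\,u^2}.
\]
Since $\Delta_\ell \geq 0$, $1+\alpha_\ell > 0$, and the pairs $(\Delta_\ell,\alpha_\ell)$ are pairwise distinct, for every $\ell\neq\ell'$ the equation $\omega_\ell(u)^2 = \omega_{\ell'}(u)^2$ has at most one positive root in $u$; hence only finitely many $u\in(0,u_m]$ are ``bad'', and I fix a ``good'' $u$ at which all $\omega_\ell(u)$ are pairwise distinct. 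The identity $\sum_\ell c_\ell\,y_\ell(t) = 0$ on $[0,t_f]$ extends by real-analyticity in $t$ to all $t\in\mathbb{R}$, and the standard linear independence of sines with pairwise distinct positive frequencies (via Fourier transform or a Wronskian) forces $c_\ell\,(1+\alpha_\ell)\,u/\omega_\ell(u) = 0$. The prefactor being strictly positive yields $c_\ell = 0$ for every $\ell$, providing the required contradiction.
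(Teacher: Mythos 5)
Your proposal is correct and follows essentially the same route as the paper's proof: a constant single-axis control, the explicit Rodrigues-formula solution, a choice of amplitude making the effective frequencies $\sqrt{\Delta_\ell^2+(1+\alpha_\ell)^2u^2}$ pairwise distinct (possible by the pairwise distinctness of $(\Delta_\ell,\alpha_\ell)$, $\Delta_\ell\geq0$, $1+\alpha_\ell>0$), analyticity in $t$, and linear independence of trigonometric functions with distinct frequencies to force all coefficients $\phi_{k+1}(\ell)-\sum_j\beta^k_j\phi_j(\ell)$ to vanish, contradicting the linear independence of $\Phi$. The only cosmetic differences are that you drive along the $x$-axis and use the sine ($y$) component, while the paper drives along the $y$-axis and carries out the independence step explicitly via odd time-derivatives and a Vandermonde determinant in the squared frequencies.
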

\begin{proof}
    For brevity, we identify $\alpha$ with $1+\alpha$ for the remainder of this proof.
    We start by writing
	\begin{equation*}
	\textbf{h}^{(K)}(\ve^{k+1},\textbf{u},t) - \textbf{h}^{(k)}(\beta^k,\textbf{u},t)=\sum_{\ell=1}^K\Bigg( \phi_{k+1}(\ell)-\sum_{j=1}^k \beta^k_j \phi_j(\ell)\Bigg) \textbf{Y}_{\textbf{u},(\Delta,\alpha)_\ell}(t).
	\end{equation*}
	Since the functions $\{ \phi_1,\ldots,\phi_K \}$ are linearly independent, it holds that
	\begin{equation}\label{eq:widetildeell2}
	\exists \widetilde{\ell}\in\{1,\ldots,K\}:\quad\phi_{diff}(\widetilde{\ell}):= \phi_{k+1}(\widetilde{\ell})- \sum_{j=1}^k \beta^k_j \phi_j(\widetilde{\ell})\neq 0.
	\end{equation}
    According to \eqref{eq1}, we have for any $(\alpha,\Delta)_{\ell}$
	\begin{equation}\label{eq: Matrix ODE in X}
	\frac{d}{dt}\textbf{X}(t)=\Big[\Delta_\ell A+\alpha_{\ell} (u_x B_x+u_y B_y)\Big]\textbf{X}(t),\quad
	\textbf{X}(0)=\textbf{X}_0,
	\end{equation}
	where
	\begin{equation*}
	A=\begin{bmatrix}
	0&-1&0\\
	1&0&0\\
	0&0&0
	\end{bmatrix},\quad
	B_x=\begin{bmatrix}
	0&0&0\\
	0&0&-1\\
	0&1&0
	\end{bmatrix},\quad
	B_y=\begin{bmatrix}
	0&0&1\\
	0&0&0\\
	-1&0&0
	\end{bmatrix},\quad
	\textbf{X}_0=\begin{bmatrix}
	0\\
	0\\
	1
	\end{bmatrix}.
	\end{equation*}
    Now, consider the control $\widetilde{\textbf{u}}:=[0,b]^\top$ and a corresponding control time $\widetilde{t}\in[0,t_f]$, where both $b\in\mathbb{R}\setminus\{0\}$ and $\widetilde{t}$ are to be chosen later.
    We have $\textbf{Y}_{\textbf{u},(\Delta,\alpha)_\ell}(\widetilde{t})=C\textbf{X}(\textbf{u},(\alpha,\Delta)_{\ell};\widetilde{t})$, where $\textbf{X}(\textbf{u},(\alpha,\Delta)_{\ell};\widetilde{t})$ is the solution to \eqref{eq: Matrix ODE in X} and $C=\begin{bmatrix}
	1&0&0\\0&1&0
	\end{bmatrix}$. 
    Thus, we obtain $$\textbf{Y}_{\widetilde{\textbf{u}},(\Delta,\alpha)_\ell}(\widetilde{t})=Ce^{\widetilde{t}(\Delta_\ell A+\alpha_\ell bB_y)}\textbf{X}_0.$$
	Since $\Delta_\ell A+\alpha_\ell bB_y$ is skew-symmetric, we can compute its exponential matrix explicitly.
	By setting $\widetilde{A}:=\widetilde{t}(\Delta_\ell A+\alpha_\ell bB_y)$ and $x_\ell:=\sqrt{\Delta_\ell^2+\alpha_{\ell}^2b^2}$, we have
	$e^{\widetilde{A}}=I_3+\frac{\sin(\widetilde{t}x)}{\widetilde{t}x}\widetilde{A}+\frac{1-\cos(\widetilde{t}x_\ell)}{\widetilde{t}^2x_\ell^2}\widetilde{A}^2$ (see, e.g., \cite{Rodrigues1840}).
	Since $CI_3\textbf{X}_0=0$ and $$\;\widetilde{A}^2=\widetilde{t}^2\begin{bmatrix}
	-\Delta_\ell^2-\alpha_{\ell}b^2&0&0\\
	0&-\Delta_\ell^2&\Delta_\ell\alpha_{\ell}b\\
	0&\Delta_\ell\alpha_{\ell}b&-\alpha_{\ell}^2b^2
	\end{bmatrix},$$
	we obtain
 \begin{equation*}
     \resizebox{.95\hsize}{!}{$C\textbf{X}(\widetilde{\textbf{u}},\alpha_{\ell};\widetilde{t})=C\Bigg(\frac{\sin(\widetilde{t}x_\ell)}{x_\ell}\begin{bmatrix}
	\alpha_{\ell}b\\0\\0
	\end{bmatrix}+\frac{1-\cos(\widetilde{t}x_\ell)}{x_\ell^2}\begin{bmatrix}
	0\\\Delta_\ell\alpha_{\ell}b\\-\alpha_{\ell}^2b^2
	\end{bmatrix} \Bigg)
    =\begin{bmatrix}
	\frac{\sin(\widetilde{t}x_\ell)}{x_\ell}\alpha_{\ell}b\\
	\frac{1-\cos(\widetilde{t}x_\ell)}{x_\ell^2}\Delta_\ell\alpha_{\ell}b
	\end{bmatrix}.$}
 \end{equation*}
	Thus, we have
    \begin{equation*}
	\textbf{h}^{(K)}(\ve^{k+1},\widetilde{\textbf{u}},\widetilde{t}) - \textbf{h}^{(k)}(\beta^k,\widetilde{\textbf{u}},\widetilde{t})=b\sum_{\ell=1}^{K}\phi_{diff}(\ell)\alpha_{\ell}\begin{bmatrix}
	\frac{\sin(\widetilde{t}x_\ell)}{x_\ell}\\
	\Delta_\ell\frac{1-\cos(\widetilde{t}x_\ell)}{x_\ell^2}
	\end{bmatrix}=:F(\widetilde{t}).
    \end{equation*}
	Now, seeking a contradiction, assume that $\textbf{h}^{(K)}(\ve^{k+1},\widetilde{\textbf{u}},\widetilde{t}) - \textbf{h}^{(k)}(\beta^k,\widetilde{\textbf{u}},\widetilde{t})=0$ for all $\widetilde{t}\in[0,t_f]$ and all $b\in\mathbb{R}\setminus\{0\}$.
	Since $F$ is analytic in $\widetilde{t}$, we obtain $F^{(k)}(\widetilde{t})=0$ for all $k\in\mathbb{N}$ and all $\widetilde{t}\in[0,t_f]$.
	For $k$ odd, we have
	\begin{equation}\label{eq:FkT2}
	F^{(k)}(\widetilde{t})=b\sum_{\ell=1}^{K}\phi_{diff}(\ell)\alpha_{\ell}(-1^{\frac{k-1}{2}})\begin{bmatrix}
	x_\ell^{k-1}\cos(Tx_\ell)\\\Delta_\ell x_\ell^{k-2}\sin(Tx_\ell))
	\end{bmatrix}.
	\end{equation}
	Since $F^{(k)}(T)=0$ for all $k$ odd, the first component of $F^{(k)}(T)$ in \eqref{eq:FkT2}, for different $k$ odd, implies that
	\begin{equation*}
	\underbrace{\begin{bmatrix}
		1&1&\cdots& 1\\
		x_1^2&x_2^2&\cdots&x_K^2\\
		x_1^4&x_2^4&\cdots&x_K^4\\
		\vdots&\vdots&\vdots&\vdots\\
		x_1^{\widetilde{K}}&x_2^{\widetilde{K}}&\cdots&x_K^{\widetilde{K}}\\
		\end{bmatrix}}_{=:D}\underbrace{\begin{bmatrix}
		\phi_{diff}(1)\alpha_{1}\cos(Tx_1)\\
		\phi_{diff}(2)\alpha_{2}\cos(Tx_2)\\
		\vdots\\
		\phi_{diff}(K)\alpha_{K}\cos(Tx_K)
		\end{bmatrix}}_{=:\pmb{\phi}_{\widetilde{t}}}=0.
	\end{equation*}
	Notice that $D$ is a Vandermonde matrix (see, e.g., \cite{Lundengrd2017GeneralizedVM}).
	Now, let $\widetilde{K}=2(K-1)$, meaning that $D\in\mathbb{R}^{(\frac{\widetilde{K}}{2}+1)\times K}$ is a square matrix.
	Then, the determinant of $D$ is given exactly by 
	$$\det(D)=\prod_{1\leq i<j\leq K}(x_j^2-x_i^2).$$
	This implies that two rows of $D$ are linearly independent if and only if $|x_i|\neq|x_j|$.
	Hence, $\det(D_x)=\det(D_y)\neq0$ (and therefore $\pmb{\phi}_{\widetilde{t}}=0$) if and only if $|x_i|\neq|x_j|$ for $i\neq j$.
	Recalling that $x_\ell=\sqrt{\Delta_\ell^2+\alpha_{\ell}^2b^2}$, $|x_i|\neq|x_j|$ is equivalent to $\Delta_i^2+\alpha_{i}^2b^2\neq\Delta_j^2+\alpha_j^2b^2$.
    For $i\neq j$ we also have $\alpha_i\neq\alpha_j$ and/or $\Delta_i\neq\Delta_j$ by definition.
	Since $\alpha_\ell\in[0.8,1.2]$ and $\Delta_\ell\in \Delta_0+2\pi[-0.2,0.2]$ with $\Delta_0\geq0.4\pi$, we obtain $\alpha_i^2\neq\alpha_j^2$ and/or $\Delta_i^2\neq\Delta_j^2$ for $i\neq j$.
    Thus, there exists $b\in\mathbb{R}\setminus {0}$ such that $\Delta_i^2+\alpha_{i}^2b^2\neq\Delta_j^2+\alpha_j^2b^2$ for all $i,j\in\{1,\ldots,K\}$ with $i\neq j$.
    In conclusion, we have $|x_i|\neq|x_j|$ for $i\neq j$, which implies that $\pmb{\phi}_{\widetilde{t}}=0$ and therefore $\phi_{diff}(\ell)\alpha_{\ell}\cos(\widetilde{t}x_\ell)=0$ for all $\ell\in\{0,\ldots,K\}$ and all $\widetilde{t}\in[0,t_f]$.
	However, we also have $\phi_{diff}(\widetilde{\ell})\neq0$ by \eqref{eq:widetildeell2}, $\alpha_{\widetilde{\ell}}>0$ and $x_{\widetilde{\ell}}>0$.
	Thus, there exists $\widetilde{t}\in[0,t_f]$ such that $\phi_{diff}(\widetilde{\ell})\alpha_{\widetilde{\ell}}\cos(\widetilde{t}x_{\widetilde{\ell}})\neq0$, which is a contradiction.
\end{proof}
Analogously to the proof of Theorem \ref{thm: Positive Discriminatory Step}, one can show that any solution $(\textbf{u}_1,t_1)$ to the initialization problem \eqref{eq: initialization W} satisfies $[W(\textbf{u}_1,t_1)]_{1,1}>0$. 
We conclude our analysis by the following theorem.
\begin{theorem}
    Let $(\textbf{u}_k,t_k)$, $k=1,\ldots,K$, be a set of controls and corresponding control times generated by GRA.
    Then problem~\eqref{eqbeta} is uniquely solvable by $\beta=\beta_\star$.
\end{theorem}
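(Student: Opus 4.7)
The plan is to recognize this statement as the capstone that assembles the two preceding results, Theorem~\ref{thm:uniqsolution} and Theorem~\ref{thm: Positive Discriminatory Step}, together with the equivalence of \eqref{eqbeta} and the compact quadratic form \eqref{eq: pbminW}. First I would recast the problem via that equivalence: it suffices to minimize $\langle \beta_\star - \beta \mid W \mid \beta_\star - \beta\rangle$ over $\widehat{\mathbb{R}}^K$, where $W=\sum_{k=1}^K W(\textbf{u}_k,t_k)$. If $W$ can be shown to be positive definite, the objective is a strictly convex quadratic on all of $\mathbb{R}^K$, and since $\beta_\star \in \widehat{\mathbb{R}}^K$ by the modelling assumption that $P_\star$ is representable in the basis $\Phi$, the unique global minimum (equal to zero) is attained at $\beta=\beta_\star$. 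The whole argument thus reduces to verifying positive definiteness of $W$.

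To establish positive definiteness I would invoke Theorem~\ref{thm:uniqsolution} and check its two hypotheses against the sequence $\{(\textbf{u}_k,t_k)\}_{k=1}^K$ produced by GRA. The first hypothesis, $[W(\textbf{u}_1,t_1)]_{1,1}>0$, follows from the remark immediately after Theorem~\ref{thm: Positive Discriminatory Step}: the initialization problem \eqref{eq: initialization W} has a strictly positive optimal value, because the argument of Theorem~\ref{thm: Positive Discriminatory Step} specialises to $k=0$ with $\beta^0$ empty and $\phi_{diff}=\phi_1\not\equiv 0$, and the same Vandermonde/analyticity contradiction produces a pair $(\widetilde{\textbf{u}},\widetilde{t})$ with $\|\textbf{h}^{(1)}(1,\widetilde{\textbf{u}},\widetilde{t})\|^2>0$; since $(\textbf{u}_1,t_1)$ maximises this quantity, the bound propagates. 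The second hypothesis — that $\textbf{v}_{\beta^k}$ is never in the kernel of $[W(\textbf{u}_{k+1},t_{k+1})]_{[1:k+1,1:k+1]}$ for $k=1,\ldots,K-1$ — is precisely the strict positivity asserted by Theorem~\ref{thm: Positive Discriminatory Step} applied at each GRA iteration.

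With both hypotheses verified, Theorem~\ref{thm:uniqsolution} yields that $W$ is positive definite, and the strict convexity remark from the first paragraph then delivers uniqueness and identifies the solution as $\beta_\star$. I do not anticipate any genuine obstacle: the structural difficulty of the analysis — producing, at every step, a control that eliminates the current kernel direction — has been absorbed into Theorem~\ref{thm: Positive Discriminatory Step}, whose Vandermonde argument on the analytic observables is the real technical core. The only minor care points are (i) noting that $\beta_\star\in\widehat{\mathbb{R}}^K$, which is built into the identification setup, and (ii) being explicit that the compact form \eqref{eq: pbminW} uses exactly this $\beta_\star$, so that the strictly convex quadratic vanishes at it. The final theorem is therefore best presented as a short corollary stringing these facts together, without any additional computation.
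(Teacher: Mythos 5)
Your proposal is correct and follows essentially the same route as the paper: invoke Theorem~\ref{thm: Positive Discriminatory Step} to verify the kernel hypothesis of Theorem~\ref{thm:uniqsolution} (with the initialization case handled by the remark that the same argument gives $[W(\textbf{u}_1,t_1)]_{1,1}>0$), conclude that $W$ is positive definite, and transfer uniqueness back through the equivalence of \eqref{eq: pbminW} and \eqref{eqbeta}. Your additional remarks on strict convexity over $\widehat{\mathbb{R}}^K$ and on $\beta_\star$ being feasible simply make explicit what the paper leaves implicit.
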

\begin{proof}
    Let $\beta^k$ be the solution to the fitting step problem \eqref{eq: fitting step W} for $k=1,\ldots,K-1$. By Theorem \ref{thm: Positive Discriminatory Step}, the vector $\textbf{v}_{\beta^k} =[(\beta^k)^\top,-1]^\top$ is not in the kernel of $[W(\textbf{u}_{k+1},t_{k+1})]_{[1:k+1,1:k+1]}$ for all $k\in\{1,\ldots,K-1\}$.
    Thus, we obtain by Theorem~\ref{thm:uniqsolution} that the matrix $W=\sum_kW(\textbf{u}_k,t_k)$ is positive definite.
    Hence, problem~\eqref{eq: pbminW} is uniquely solvable by $\beta=\beta_\star$.
    By equivalency of problems~\eqref{eq: pbminW} and \eqref{eqbeta}, we obtain the result.
\end{proof}

Notice that, in the notation above, OGRA simply reorders rows and columns of the matrix $W^k$ while attempting to find and correct its kernel.
In fact, the second improvement in lines 8-9 in OGRA skips the discriminatory step only if there exists a row and column of $W^k$ with index $\ell_{k+1}$ such that, by swapping $\phi_{k+1}$ and $\phi_{\ell_{k+1}}$, the matrix $W^k_{[1:k+1,1:k+1]}$ is positive definite.
Thus, if $\textrm{tol}_1$ is sufficiently small, one can also prove convergence of OGRA analogously to GRA.

\section{Numerical Results}\label{sec:numerics}
We test GRA and OGRA on the setting described in Sec.~\ref{sec:distribution}.
We choose a maximum control time of 160 ms, which corresponds to a normalized time $t_f=16$.
The shift of the parameter $\Delta$ is set to $\Delta_0=4\pi$ and the width of its interval to $4\pi\Delta_1$, with $\Delta_1=0.2$.
We consider two different probability distributions $P_\star$, a simple Gaussian one (see panel on the left in Fig.~\ref{fig:gauss_reconstructed}) and a step distribution with three peaks (see panel on the left in Fig.~\ref{fig:dirac_reconstructed}).
They are discretized by a uniform mesh of 100 points (10 points in each direction). Similarly, we discretize the set of linearly independent functions $\{\phi_j\}_{j=1}^K$ by setting $K=100$ and $\phi_j=\ve_j\in\mathbb{R}^{100}$ the $j$-th canonical vector in $\mathbb{R}^{100}$.
Finally, we fix the tolerances for OGRA to be $\textrm{tol}_1=10^{-14}$ and $\textrm{tol}_2=10^{-4}$.

Now, let us briefly discuss how we solve the sub-steps of the algorithms numerically.
The initialization and discriminatory step problems are solved by a second-order trust-region method.
For the fitting step, we use the equivalent compact form~\eqref{eq: fitting step W}.
The corresponding first-order optimality system is given by 
\begin{equation}\label{eq:fitting step optimality}
    [W^k]_{[1:k,1:k]}\beta=[W^k]_{[1:k,k+1]}.
\end{equation}
Since the matrix $[W^k]_{[1:k,1:k]}$ is symmetric and positive definite, any solution to Eq.~\eqref{eq:fitting step optimality} is a global solution to Eq.~\eqref{eq: fitting step W}.
Hence, we solve the fitting step problem by solving the linear system \eqref{eq:fitting step optimality} using a solver based on the SVD.
This solver first computes the SVD of $[W^k]_{[1:k,1:k]}$, i.e. two orthogonal matrices $U,V\in\mathbb{R}^{k\times k}$ and a diagonal matrix $\Sigma\in\mathbb{R}^{k\times k}$ such that $U\Sigma V^\top=[W^k]_{[1:k,1:k]}$.
To make the method more robust against numerical instabilities, it then removes all singular values that are smaller than a given tolerance, and the corresponding columns of $U$ and $V$.
Finally, it computes $\beta$ by setting $\widetilde{\beta}=V^\top\beta$ and solving $\Sigma\widetilde{\beta}=U^\top[W^k]_{[1:k,k+1]}$.

After running the algorithms, we reconstruct $P_\star$ by solving problem \eqref{eqbeta}.
Notice that, using the notation \eqref{eq: W}-\eqref{eq: gamma}, the gradient of the cost function in \eqref{eqbeta} is given by $W\beta-\sum_k\Gamma(\textbf{u}_k,t_k)^\top\textbf{Y}^{\textrm{exp}}_{\textbf{u}_k}(t_k)$,
where the columns of $\Gamma$ are given by the $\gamma_j(\textbf{u}_k,t_k)$ defined in Eq.~\eqref{eq: gamma}.
We can also immediately see that the Hessian of the cost function in Eq.~\eqref{eqbeta} is exactly $W$, which is guaranteed to be positive definite by our analysis in Sec.~\ref{sec:analysis}.
Hence, the global solution to Eq.~\eqref{eqbeta} is given by the (unique) solution to
\begin{equation}\label{eq:Reconstruction}
    W\beta=\sum_k\Gamma(\textbf{u}_k,t_k)^\top\textbf{Y}^{\textrm{exp}}_{\textbf{u}_k}(t_k).
\end{equation}
However, in order to ensure that the  coefficients of the computed solution  correspond to a probability distribution (i.e. belong to $\widehat{\mathbb{R}}^K$), we add the necessary constraints and solve Eq.~\eqref{eqbeta} with the second-order interior point algorithm of MATLAB's \textit{fmincon}-solver.
Nonetheless, the code includes an option to solve directly Eq.~\eqref{eq:Reconstruction} using a SVD solver (see Section \ref{sec:code}).

Now, we run both GRA and OGRA on the canonical set $\{\phi_j\}_{j=1}^{100}$ of hat functions.
In contrast to \cite{spinpaper}, we do not include any additional random vectors in the canonical set for OGRA and also do not remove any elements from the set during OGRA (but still reorder them).
The reason for this is that we experienced for the problem of this paper that additional random elements do not improve the results and removing elements from the canonical set does not reduce the number of controls, but is more likely to make the final identification problem numerically unstable.
While GRA computes $100$ controls, OGRA only designs $51$ by skipping 48 discriminatory steps.
We then choose $P_\star$ as the Gaussian distribution in Fig.~\ref{fig:gauss_reconstructed} (left) and compute the corresponding experimental realizations $\{\textbf{Y}^{\textrm{exp}}_{\textbf{u}_k}(t_k)\}_{k=1}^{\widetilde{K}}$ for the two resulting sets of control fields, with $\widetilde{K}=100$ for GRA and $\widetilde{K}=51$ for OGRA.
Reconstructing $P_\star$ as described above, we obtain the coefficient vectors $\beta_{rec}$ and thereby the distributions $P_{rec}=\sum_{j=1}^{100}\beta_{rec,j}\phi_j$ corresponding to GRA and OGRA, shown in Fig.~\ref{fig:gauss_reconstructed}.
Looking at the errors with respect to the true distribution $P_\star$ shown in Fig.~\ref{fig:Test1_err}, we observe that GRA  outperforms OGRA by one order of magnitude.
However, the difference is so small that it is not visible in the reconstructed distributions.
Similar results are obtained for a step distribution with three peaks in Fig.~\ref{fig:dirac_reconstructed}.
\begin{figure}[t]
	\centering
    \includegraphics[width=1\linewidth]{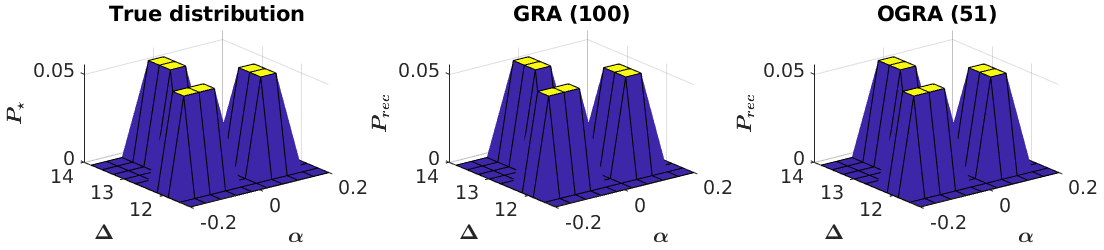}
    \caption{Same as Fig.~\ref{fig:gauss_reconstructed} but for a step distribution with three peaks. In brackets are the number of control fields for each set.}
    \label{fig:dirac_reconstructed}
\end{figure}

To investigate the dependence of the results on the choice of parameters, we repeat the experiment for different maximum control times, widths of the $\Delta$-interval and $K=400$ mesh points.
First, we take a look at the number of control fields generated by OGRA in Tab.~\ref{tab:OGRANumbers}.
\begin{table}[t]
\centering
    \begin{small}
	\begin{tabular}{|c||c|c|c|c||c|c|c|c|}
        \hline
        &\multicolumn{4}{c||}{\small{$K=100$}}&\multicolumn{4}{c|}{\small{$K=400$}}\\
		\hline \hline
		 \backslashbox[9mm]{$\Delta_1$}{$t_f$}& $8$  & $16$  & $24$  & $32$ & $8$  & $16$  & $24$  & $32$\\ 
		\hline \hline
        $0.1$& $\;$70$\;$ & $\;$\pmb{52}$\;$ & $\;$\pmb{50}$\;$ & $\;$\pmb{50}$\;$    &  259 & 299 & 276 &  240\\ 
        \hline
		$0.2$& $\;$\pmb{58}$\;$ & $\;$\pmb{51}$\;$ & $\;$\pmb{50}$\;$ & $\;$\pmb{50}$\;$    &  305 & 288 & \pmb{223} & \pmb{220}\\ 
        \hline
        $0.4$& $\;$\pmb{56}$\;$ & $\;$\pmb{50}$\;$ & $\;$\pmb{50}$\;$ & $\;$\pmb{50}$\;$    &  326 & 256 & \pmb{211} & \pmb{200}\\
        \hline
        $0.8$& $\;$\pmb{51}$\;$ & $\;$\pmb{50}$\;$ & $\;$\pmb{50}$\;$ & $\;$\pmb{50}$\;$    &  292 & \pmb{210} & \pmb{200} & \pmb{200}\\
        \hline
        $1.6$& $\;$\pmb{50}$\;$ & $\;$\pmb{50}$\;$ & $\;$\pmb{51}$\;$ & $\;$\pmb{50}$\;$    &  275 & \pmb{205} & \pmb{200} & \pmb{200}\\
        \hline
	\end{tabular} 
    \end{small}
 \caption{Number of controls computed by OGRA for a control bound $u_m=10$, and different numbers of discretization points $K$, maximum control times $t_f$ and widths $4\pi\Delta_1$ of the $\Delta$-interval. Bold numbers indicate that the number of OGRA controls is less than $60\%$ of the number of GRA controls. Notice that GRA always generates $K$ controls.}
 \label{tab:OGRANumbers}
\end{table}
We observe that the number of generated control fields is increasing with decreasing maximum control time and decreasing width of the $\Delta$-interval.
We also observe that the ratio between the number of GRA controls (which is equal to the number of mesh points $K$) and the number of OGRA controls is decreasing with an increasing number of mesh points.
To validate this point, we plot the number of controls for both algorithms, for different total numbers of mesh points in Fig.~\ref{fig:nr controls}.
\begin{figure}[t]
	\centering
	\includegraphics[width=0.3\linewidth]{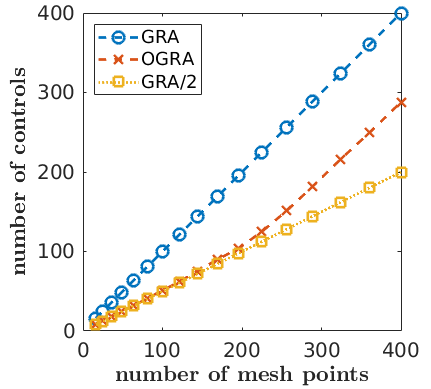}
	\caption{Number of controls for GRA (dashed circles) and OGRA (solid crosses) for different total numbers of mesh points. To highlight the ratio between the amount of controls, we also plot half the amount of GRA controls (dotted squares).}
	\label{fig:nr controls}
\end{figure}

An explanation of this behaviour is given by the condition number of the corresponding matrices $W$, defined in Eq.~\eqref{eq: W}, representing the compact form \eqref{eq: pbminW} of the final identification problem.
The condition numbers corresponding to GRA and OGRA for the settings in Tab.~\ref{tab:OGRANumbers} are shown in Tabs.~\ref{tab:u_m10_100} and~\ref{tab:u_m10}.
Based on our theoretical results for GRA and OGRA proving that $\widetilde{K}=K$ controls are sufficient, we also add a set of fully random controls (randomized within the given bounds $u_m$ and $t_f$) that has the same number of controls as GRA (i.e. $\widetilde{K}=100$ and $\widetilde{K}=400$, respectively).
\begin{table}[t]
\centering
    \begin{small}
	\begin{tabular}{|c||c|c|c|c||c|c|c|c||c|c|c|c|}
        \hline
        &\multicolumn{4}{c||}{GRA}&\multicolumn{4}{c||}{OGRA}&\multicolumn{4}{c|}{random control set}\\
		\hline \hline
		 \backslashbox[9mm]{$\Delta_1$}{$t_f$}& $8$  & $16$  & $24$  & $32$ & $8$  & $16$  & $24$  & $32$ & $8$  & $16$  & $24$  & $32$   \\ 
		\hline \hline
        $0.1$& $3e16$ & $\pmb{5e07}$ & $\pmb{7e03}$ & $\pmb{6e03}$     & $3e15$ & $\pmb{3e08}$ & $\pmb{7e05}$ & $\pmb{7e06}$    & $1e18$ & $\pmb{1e09}$ & $\pmb{1e06}$ & $\pmb{3e06}$ \\ 
        \hline
		$0.2$& $\pmb{7e09}$ & $\pmb{4e06}$ & $\pmb{3e03}$ & $\pmb{1e03}$    & $\pmb{5e09}$ & $\pmb{1e08}$ & $\pmb{1e08}$ & $\pmb{1e06}$     & $\pmb{7e12}$ & $\pmb{1e09}$ & $\pmb{4e04}$ & $\pmb{9e03}$\\
        \hline
        $0.4$& $\pmb{1e11}$ & $\pmb{5e03}$ & $\pmb{1e03}$ & $\pmb{1e03}$    & $\pmb{2e11}$ & $\pmb{3e06}$ & $\pmb{1e07}$ & $\pmb{1e07}$     & $1e16$ & $\pmb{7e04}$ & $\pmb{9e03}$ & $\pmb{1e03}$\\
        \hline
        $0.8$& $\pmb{1e06}$ & $\pmb{2e03}$ & $\pmb{1e03}$ & $\pmb{1e03}$     & $\pmb{3e07}$ & $\pmb{8e05}$ & $\pmb{2e05}$ & $\pmb{2e05}$    & $\pmb{7e10}$ & $\pmb{3e03}$ & $\pmb{2e03}$ & $\pmb{1e03}$\\
        \hline
        $1.6$ & $\pmb{2e04}$ & $\pmb{9e02}$ & $\pmb{1e03}$ & $\pmb{8e02}$     & $\pmb{2e06}$ & $\pmb{4e07}$ & $\pmb{1e05}$ & $\pmb{7e07}$     & $\pmb{5e09}$ & $\pmb{1e04}$ & $\pmb{2e03}$ & $\pmb{1e03}$ \\
        \hline
	\end{tabular} 
    \end{small}
 \caption{Condition number of $W$ for different control sets, maximum control times $t_f$ and widths $4\pi\Delta_1$ of the $\Delta$-interval. The total number of mesh points is $K=100$ and the bound on the control is $u_m=10$. Bold numbers indicate that the condition number is smaller than $1e15$.}
 \label{tab:u_m10_100}
\end{table}
\begin{table}[t]
\centering
    \begin{small}
	\begin{tabular}{|c||c|c|c|c||c|c|c|c||c|c|c|c|}
        \hline
        &\multicolumn{4}{c||}{GRA}&\multicolumn{4}{c||}{OGRA}&\multicolumn{4}{c|}{random control set}\\
		\hline \hline
		 \backslashbox[9mm]{$\Delta_1$}{$t_f$}& $8$  & $16$  & $24$  & $32$ & $8$  & $16$  & $24$  & $32$ & $8$  & $16$  & $24$  & $32$   \\ 
		\hline \hline
        $0.1$& $2e20$ & $1e19$ & $1e20$ & $2e15$ &     $3e19$ & $2e19$ & $9e19$ & $3e15$     & $4e19$ & $2e19$ & $2e19$ & $3e18$\\ 
        \hline
		$0.2$& $1e19$ & $1e19$ & $\pmb{1e14}$ & $\pmb{4e13}$ &     $6e19$ & $5e19$ & $\pmb{2e14}$ & $\pmb{1e14}$&     $2e19$ & $2e19$ & $6e19$ & $3e18$\\ 
        \hline
        $0.4$& $8e19$ & $2e19$ & $\pmb{1e14}$ & $\pmb{1e04}$&     $5e19$ & $8e18$ & $\pmb{1e14}$ & $\pmb{4e07}$ &     $6e19$ & $4e19$ & $8e15$ & $\pmb{2e06}$\\
        \hline
        $0.8$& $3e19$ & $\pmb{2e13}$ & $\pmb{1e04}$ & $\pmb{9e03}$ &     $3e19$ & $\pmb{2e14}$ & $\pmb{1e07}$ & $\pmb{5e07}$&     $5e19$ & $4e18$ & $\pmb{1e08}$ & $\pmb{1e05}$\\
        \hline
        $1.6$ & $6e20$ & $\pmb{1e10}$ & $\pmb{2e04}$ & $\pmb{6e03}$ &     $1e20$ & $\pmb{4e11}$ & $\pmb{8e07}$ & $\pmb{2e08}$ &     $3e19$ & $6e18$ & $\pmb{1e08}$ & $\pmb{2e04}$ \\
        \hline
	\end{tabular} 
    \end{small}
 \caption{Same as Tab.~\ref{tab:u_m10_100} but for a total number of mesh points $K=400$.}
 \label{tab:u_m10}
\end{table}
We observe that the condition number shows the same correlation with respect to the maximum control time, width of the $\Delta$-interval and number of mesh points, as the number of OGRA controls.
In particular, the condition number of OGRA is below $1e15$ for all settings where OGRA computed less than $60\%$ of the number of GRA controls.

Regarding the condition numbers, GRA and random controls show the same behaviour as OGRA.
The reason can be found by taking a closer look at the entries of the matrix $W$.
It can be shown that the difference between two adjacent rows or columns of $W$ is bounded in norm by $u_m$, $t_f$ and the mesh size for the probability distribution, i.e.,  $\alpha_{\ell+1}-\alpha_\ell$ and $\Delta_{\ell+1}-\Delta_\ell$.
The interested reader can find more details about this result in \ref{sec:appendixA}.
We conclude that, if the control bound, the maximum control time, or  the mesh size (or equivalently the width of the $\Delta$-interval) is too small, the difference between two adjacent rows/columns of $W$ can become numerically equal to zero, implying that $W$ has a nontrivial kernel.

In order to investigate the impact of this numerical instability on the reconstructed results, we consider again the setting of the beginning of this section (i.e. $\Delta_1=0.2$ and $t_f=16$), but for $K=400$ mesh points.
The results for a Gaussian and a step distribution with three peaks are plotted in Figs.~\ref{fig:gauss_reconstructed_400} and \ref{fig:3peaks_reconstructed_400}, respectively.
\begin{figure}[t]
	\centering
    \includegraphics[width=1\linewidth]{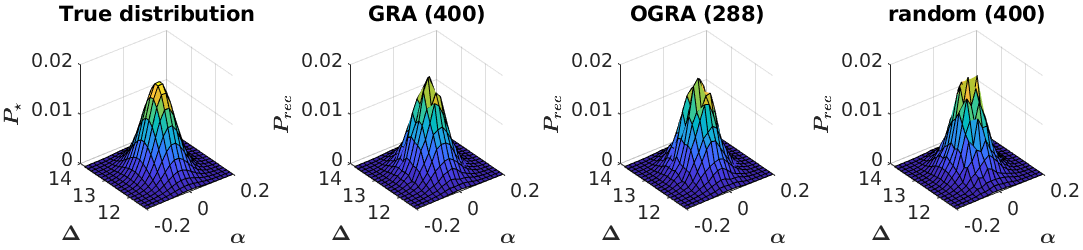}
    \caption{Same as Fig.~\ref{fig:gauss_reconstructed} but for $K=400$ and including the reconstructed distribution for $400$ random control fields. In brackets are the number of control fields for each set.}
    \label{fig:gauss_reconstructed_400}
\end{figure}
\begin{figure}[t]
	\centering
    \includegraphics[width=1\linewidth]{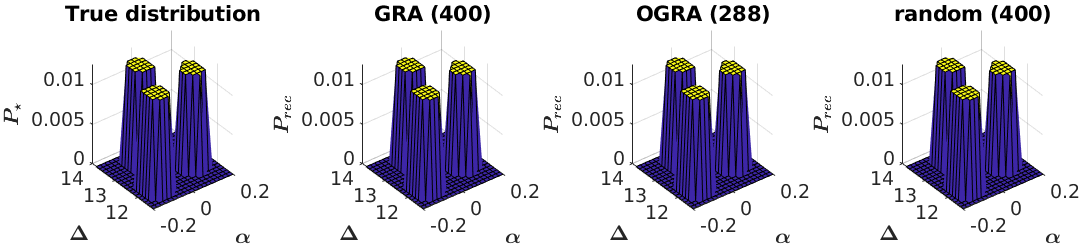}
    \caption{Same as Fig.~\ref{fig:gauss_reconstructed_400} but for a step distribution with three peaks. In brackets are the number of control fields for each set.}
    \label{fig:3peaks_reconstructed_400}
\end{figure}
We observe that all three control field sets are able to fully reconstruct the step distribution and, at least partially, the Gaussian distribution.
This is because the admissible set of solutions for the final identification problem \eqref{eqbeta} is restricted to $\widehat{\mathbb{R}}^K$.
Thus, a bad condition number does not necessarily imply that it is impossible to (at least partially) reconstruct the true probability distribution.
However, a good condition number guarantees stability of the numerical solver and improves the accuracy of the results.
In this context, notice that if we would sufficiently increase either the control bound $u_m$, or the maximum control time $t_f$, both GRA and OGRA would show better condition numbers and be able to perfectly reconstruct also the Gaussian distribution in Fig.~\ref{fig:gauss_reconstructed_400}.

We observe also that, if one knows the number of sufficient control functions $\widetilde{K}=K$, then even completely random control fields can be able to perform similarly to GRA and OGRA controls.
However, while OGRA finds automatically $\widetilde{K}$ (reduces the number of control fields to a sufficient amount), there is no indicator for a sufficient amount of random controls in general.
Additionally, the corresponding condition numbers are in many cases worse than for GRA and OGRA, as seen in Tabs.~\ref{tab:u_m10_100} and \ref{tab:u_m10}, meaning that they are more likely to show numerical instabilities.
Thus, the recommended strategy is clearly OGRA, since it is able to reduce the number of control fields by up to $50\%$, while accurately reconstructing the probability distributions.

Lastly, we remark that making the tolerance $\textrm{tol}_2$ smaller can generally lead to even fewer controls being computed by OGRA. However, this in turn can lead to less accurate results in the reconstructed solution, meaning the user has to decide for themselves if such a trade-off is desirable.

\section{Conclusion}\label{sec:conclusion}
In conclusion, we introduce \texttt{SPIRED}, a Greedy reconstruction algorithm to estimate spin distribution in NMR. We show that this approach can be used to jointly find the distribution of two Hamiltonian parameters, namely the offset term and the magnetic field inhomogeneity. We discuss the accuracy and limitations of this method through experimentally relevant numerical simulations. We provide and describe the codes allowing to reproduce the results of this paper. A proof of the algorithm convergence is also given.

This paper opens the way to a series of interesting questions in quantum control. A first step is to apply this algorithm to other areas in which an ensemble of quantum systems is used. Among others, we mention Bose Einstein Condensates in an optical lattice~\cite{anderson,BEC2021} or molecular rotational dynamics in gas phase~\cite{RMP:rotation,sugny2003}. The greedy reconstruction algorithm can in principle be applied to these examples, but specific constraints related to the corresponding experimental setups would be to take into account and would require adaptations of the \texttt{SPIRED} code. A final stage concerns the experimental implementation of this approach which seems realistic in the near future in view of the current state of the art.

\section*{Acknowledgements}
Simon Buchwald is funded by the DFG via the collaborative research center SFB1432,
Project-ID 425217212. 
Gabriele Ciaramella is member of the INDAM GNCS. The research of D. Sugny has been supported by the ANR project ``QuCoBEC'' ANR-22-CE47-0008-02.

\appendix
\section{Numerical stability of the matrix $W$}
\label{sec:appendixA}
We study in this section the numerical stability of $W$. Notice that for constant controls the solution to the dynamical equation~\eqref{eq1} can be written as 
\begin{equation*}
    \textbf{X}_{\textbf{u},(\Delta,\alpha)}(t)=e^{t(\Delta A+\alpha(\textbf{u}_xB_x+\textbf{u}_yB_y))}\textbf{X}_0.
\end{equation*}
Recall that for two matrices $X$ and $Y$, we have 
\begin{equation*}
    \|e^{Y}-e^X\|\leq\|Y-X\|e^{\|Y\|}e^{\|X\|}.
\end{equation*}
Now, consider two parameter pairs $(\alpha_\ell,\Delta_\ell)$ and $(\alpha_{\ell+1},\Delta_{\ell+1})$, and define $D_{\ell}:=t(\Delta_{\ell} A+\alpha_{\ell}(\textbf{u}_xB_x+\textbf{u}_yB_y))$ and $D_{\ell+1}:=t(\Delta_{\ell+1} A+\alpha_{\ell+1}(\textbf{u}_xB_x+\textbf{u}_yB_y))$.
Since $\|\textbf{X}_0\|=1$, $|\textbf{u}_x|\leq u_m$ and $|\textbf{u}_y|\leq u_m$, we obtain
\begin{align*}
    \|\textbf{X}_{\textbf{u},(\alpha_{\ell},\Delta_{\ell})}(t)-\textbf{X}_{\textbf{u},(\alpha_{\ell+1},\Delta_{\ell+1})}(t)\|&\leq e^{\|D_{\ell}\|}e^{\|D_{\ell+1}\|}\|t(\Delta_{\ell} A+\alpha_{\ell}(\textbf{u}_xB_x+\textbf{u}_yB_y))\\
    &\quad-t(\Delta_{\ell+1} A+\alpha_{\ell+1}(\textbf{u}_xB_x+\textbf{u}_yB_y))\|\\
    &\leq e^{\|D_{\ell}\|}e^{\|D_{\ell+1}\|}t_f\Big(|(\Delta_{\ell}-\Delta_{\ell+1})|\|A\|\\
    &\quad + |\alpha_{\ell}-\alpha_{\ell+1}|u_m(\|B_x\|+\|B_y\|)\Big).
\end{align*}
Since the exponential mapping is continuous, we have $e^{\|D_{\ell+1}\|}\rightarrow e^{\|D_{\ell}\|}$ for $\Delta_{\ell+1}\rightarrow\Delta_{\ell}$ and $\alpha_{\ell+1}\rightarrow\alpha_{\ell}$.
Thus, the norm of the difference between the two solutions $\textbf{X}_{\textbf{u},(\alpha_{\ell},\Delta_{\ell})}(t_f)$ and $\textbf{X}_{\textbf{u},(\alpha_{\ell+1},\Delta_{\ell+1})}(t)$ is bounded by the differences $|\Delta_{\ell}-\Delta_{\ell+1}|$, $|\alpha_{\ell}-\alpha_{\ell+1}|$, the bound to the control $u_m$ and the maximum control time $t_f$.
Recalling \eqref{eq: gamma} and that $\phi_j=\ve_j$ in our example, the matrix entries of $W$ are given by
$$W_{\ell,j}=\sum_k\langle \textbf{Y}_{\textbf{u}_k,(\alpha_\ell,\Delta_\ell)}(t_k)|\textbf{Y}_{\textbf{u}_k,(\alpha_j,\Delta_j)}(t_k)\rangle.$$

 \bibliographystyle{elsarticle-num} 
 \bibliography{SPIRED}





\end{document}